\def\colorschemesepia{sepia}
\def\colorschemedark{dark}
\def\colorschemelight{light}
\let\colorscheme\colorschemelight
\colorlet{textColor}{black}
\colorlet{bgColor}{white}
\definecolor{textColor}{HTML}{433423}
\definecolor{bgColor}{HTML}{fbf0da}
\definecolor{textColor}{HTML}{bdc1c6}
\definecolor{bgColor}{HTML}{202124}
\definecolor{textBlue}{HTML}{8ab4f8}
\definecolor{textRed}{HTML}{f9968b}
\definecolor{textGreen}{HTML}{81e681}
\definecolor{textPurple}{HTML}{c58af9}
\colorlet{textBlue}{blue!50!black}
\colorlet{textRed}{red!50!black}
\colorlet{textGreen}{green!50!black}
\definecolor{textPurple}{HTML}{681da8}
\def\colorschemesepia{sepia}
\def\colorschemedark{dark}
\def\colorschemelight{light}
\let\colorscheme\colorschemelight
\colorlet{textColor}{black}
\colorlet{bgColor}{white}
\definecolor{textColor}{HTML}{433423}
\definecolor{bgColor}{HTML}{fbf0da}
\definecolor{textColor}{HTML}{bdc1c6}
\definecolor{bgColor}{HTML}{202124}
\definecolor{textBlue}{HTML}{8ab4f8}
\definecolor{textRed}{HTML}{f9968b}
\definecolor{textGreen}{HTML}{81e681}
\definecolor{textPurple}{HTML}{c58af9}
\colorlet{textBlue}{blue!50!black}
\colorlet{textRed}{red!50!black}
\colorlet{textGreen}{green!50!black}
\definecolor{textPurple}{HTML}{681da8}
\g@addto@macro{\UrlBreaks}{%
\do\/%
\do\a\do\b\do\c\do\d\do\e\do\f\do\g\do\h\do\i\do\j\do\k\do\l\do\m%
\do\n\do\o\do\p\do\q\do\r\do\s\do\t\do\u\do\v\do\w\do\x\do\y\do\z%
\do\A\do\B\do\C\do\D\do\E\do\F\do\G\do\H\do\I\do\J\do\K\do\L\do\M%
\do\N\do\O\do\P\do\Q\do\R\do\S\do\T\do\U\do\V\do\W\do\X\do\Y\do\Z%
\do\0\do\1\do\2\do\3\do\4\do\5\do\6\do\7\do\8\do\9%
}
\newcolumntype{L}{>{$\displaystyle}l<{$}}
\algnewcommand{\LineComment}[1]{\State \textcolor{gray}{\texttt{//} \textit{#1}}}
\newenvironment*{tightemize}{\begin{itemize}[noitemsep]}{\end{itemize}}%
\newenvironment*{tightenum}{\begin{enumerate}[noitemsep]}{\end{enumerate}}%
\newtheorem{theorem}{Theorem}
\newtheorem{definition}{Definition}
\newtheorem{lemma}[theorem]{Lemma}
\newtheorem{observation}[theorem]{Observation}
\crefname{claim}{Claim}{Claims}
\crefname{property}{Property}{Properties}
\crefname{example}{Example}{Examples}
\crefname{observation}{Observation}{Observations}
\crefname{transformation}{Transformation}{Transformations}
\newcommand*{\thmdep}[2]{}
\newcommand*{\Th}{^{\textrm{th}}}
\newcommand*{\wLoG}{without loss of generality}
\let\eps\varepsilon
\newcommand*{\ihat}{\widehat{\imath}}
\newcommand*{\jhat}{\widehat{\jmath}}
\newcommand*{\vhat}{\widehat{v}}
\newcommand*{\Ahat}{\widehat{A}}
\newcommand*{\Bhat}{\widehat{B}}
\newcommand*{\Ical}{\mathcal{I}}
\newcommand*{\Icalhat}{\widehat{\mathcal{I}}}
\newcommand*{\defeq}{:=}
\DeclareMathOperator*{\E}{E}
\DeclareMathOperator*{\argmax}{argmax}
\newcommand*{\PROP}{\mathrm{PROP}}
\newcommand*{\MMS}{\mathrm{MMS}}
\DeclareMathOperator{\ECE}{\mathtt{ECE}}
\DeclareMathOperator{\resolveCycle}{\mathtt{resolveCycle}}
\DeclareMathOperator{\toOrdered}{\mathtt{toOrdered}}
\DeclareMathOperator{\pickBySeq}{\mathtt{pickBySeq}}
\DeclareMathOperator{\ECEGTwo}{\mathtt{ECEG2}}
\DeclareMathOperator{\ECECTwo}{\mathtt{ECEC2}}
\let\shortcite\cite
\let\citet\cite
\let\citep\cite
\title{Best-of-Both-Worlds Fairness of the Envy-Cycle-Elimination Algorithm}
\author{
Jugal Garg%
\thanks{Department of Industrial \& Enterprise Engineering, University of Illinois at Urbana-Champaign, USA}
\\ \texttt{\small jugal@illinois.edu}
\and
Eklavya Sharma\footnotemark[1]
\\ \texttt{\small eklavya2@illinois.edu}
}
\date{\empty}
\begin{document}

\maketitle

\begin{abstract}
We consider the problem of fairly dividing indivisible goods among agents with additive valuations. It is known that an Epistemic EFX and $2/3$-MMS allocation can be obtained using the Envy-Cycle-Elimination (ECE) algorithm\texorpdfstring{ \cite{caragiannis2023new,barman2020approximation}}{}.
In this work, we explore whether this algorithm can be randomized to also ensure ex-ante proportionality.
For two agents, we show that a randomized variant of ECE can compute an ex-post EFX and ex-ante envy-free allocation in near-linear time. However, for three agents, we show that several natural randomization methods for ECE fail to achieve
 ex-ante proportionality.

\end{abstract}

\section{Introduction}
\label{sec:intro}

In the problem of fair division, a set $M$ of $m$ indivisible goods
must be allocated to $n$ agents in a \emph{fair} manner.
Intuitively, \emph{fairness} means that we shouldn't favor certain agents over others.
However, stating a precise mathematical definition of fairness
and developing algorithms for fairly allocating the goods has turned out to be
a highly non-trivial problem, and has led to extensive research for the last fifteen years
\citep{brams1996fair,moulin2004fair}.

Formally, each agent $i$ has a valuation function $v_i$ that takes as input a subset of goods
and returns a number representing how much the agent values that subset.
Our goal is to find an allocation, i.e., a tuple $(A_1, \ldots, A_n)$
where $A_i$ is the set of goods that each agent $i$ receives,
such that the allocation is fair.

Many notions of fairness have been proposed in the literature.
One of the first notions to be studied is \emph{envy-freeness} \citep{foley1966resource}.
An allocation of items is envy-free (EF) if each agent prefers her own bundle to any other agent's bundle.
Formally, allocation $A$ is envy-free if $v_i(A_i) \ge v_i(A_j)$ for all $i \neq j$.
Another popular fairness notion is \emph{proportionality}.
An allocation is proportionally fair to an agent if the value of her own bundle is at least
$1/n$ times her value for the entire set of goods.
Formally, allocation $A$ is proportional (PROP) if $v_i(A_i) \ge v_i(M)/n$ for every agent $i$.
It's easy to see that envy-freeness and proportionality cannot always be guaranteed.
E.g., if there is a single good and two agents, some agent gets nothing.
Hence, relaxations of envy-freeness and proportionality are studied.

The notion `envy-free up to any good', abbreviated as \emph{EFX},
is widely regarded as one of the best relaxations of envy-freeness.
In an allocation, agent $i$ is said to \emph{strongly envy} another agent $j$ if
there is a good $g$ in $j$'s bundle such that $i$ envies $j$ even after $g$ is removed from $j$'s bundle.
An allocation is called EFX if no agent strongly envies any other agent.
However, despite significant effort by researchers
\citep{plaut2020almost,caragiannis2019envy,chaudhury2020efx,amanatidis2021maximum,chaudhury2021little,mahara2021extension,berger2021almost},
the guaranteed existence of EFX allocations has remained an open problem
ever since the notion was proposed in 2016 by \citet{caragiannis2016unreasonable,caragiannis2019unreasonable}.
Hence, EFX has been further relaxed to other notions, like EF1 \citep{budish2011combinatorial,lipton2004approximately},
approximate EFX \citep{plaut2020almost}, and Epistemic EFX \citep{caragiannis2023new}.

The most-widely studied relaxation of proportionality is \emph{maximin share}.
The maximin share of an agent $i$ is the maximum value she can receive by partitioning
the goods into $n$ bundles and picking the least valued bundle, i.e.,
\[ \MMS_i \defeq \max_X \min_{j=1}^n v_i(X_j). \]
We would like to give each agent a bundle that she values at least as much as her maximin share.
\citet{kurokawa2018fair,feige2022tight} showed that this is impossible,
so research turned towards finding allocations that guarantee each agent
a large multiplicative factor of their maximin share
\citep{kurokawa2018fair,barman2020approximation,ghodsi2018fair,garg2021improved,akrami2024breaking}.

Many algorithms for fair division have been proposed.
However, one algorithm (and variations thereof) features prominently in many works:
the Envy-Cycle-Elimination algorithm, hereafter abbreviated as ECE.
It was first used by \citet{lipton2004approximately} to get EF1 allocations,
and thereafter has been repurposed for several other fairness notions
\citep{barman2020approximation,amanatidis2020multiple,chaudhury2021little,bhaskar2021approximate,caragiannis2023new,chaudhury2024efx,berger2022almost,feldman2023breakingArxiv}.
In this work, we investigate the implications of randomizing ECE.

Randomized fair division is a line of work started by \cite{aziz2023best}.
For example, when there is a single good, giving the good to an agent selected uniformly randomly is fair.
In expectation, both agents get half of the good, so they are, in a sense, envy-free
(we formally define this in \cref{sec:prelims:bobw}).
In this setting, fairness is defined as a property of distributions of allocations.
Thus, randomization can help us get very strong fairness guarantees, like envy-freeness,
which are impossible to achieve otherwise.

However, randomness by itself is unsatisfactory: giving all the goods to a random agent is fair in expectation because each agent has equal opportunity, but is unfair after the randomness is realized due to the large disparity in this allocation.
To fix this, one can aim for a \emph{best-of-both-worlds} approach
\citep{babaioff2022best,aziz2023best,feldman2023breakingArxiv,AleksandrovAGW15},
i.e., find a distribution of allocations where each allocation in the support of the distribution
is fair (ex-post fairness), and the entire distribution is fair in a randomized sense (ex-ante fairness).
For example, \citet{babaioff2022best} gave an algorithm whose output is ex-post $1/2$-MMS and ex-ante proportional,
i.e., it outputs a distribution over $1/2$-MMS allocations, such that
each agent's expected value of her bundle is at least $v_i(M)/n$.

\citet{feldman2023breakingArxiv} gave a randomized algorithm whose output is
ex-ante $1/2$-EF (i.e., $\E(v_i(A_i)) \ge (1/2)\E(v_i(A_j))$ for all $j \neq i$,
where $A$ is the random allocation output by their algorithm),
ex-post EF1, and ex-post $1/2$-EFX. They did this by randomizing ECE.
To provide some intuition behind how they did this, let's first get an overview of how ECE works.
ECE starts with an empty allocation (i.e., no agent has any goods),
and repeately performs one of these two operations:
\begin{enumerate}
\item Give an unallocated good to a \emph{deserving} agent.
\item Identify an \emph{envy cycle}, i.e., a cycle of agents where each agent
    envies the next one in the cycle, and then shuffle the bundles among these agents.
\end{enumerate}
There may be multiple deserving agents or multiple envy cycles to choose from.
However, it is known \citep{lipton2004approximately,plaut2020almost} that ECE's output is
EF1 and $1/2$-EFX, regardless of how we pick the deserving agent or the envy cycle.
\citet{feldman2023breakingArxiv} showed that, by picking the envy cycle randomly
from a carefully selected distribution, one can additionally get ex-ante $1/2$-EF.

Appendix D of \cite{babaioff2022bestArxiv} randomizes ECE differently to get
ex-ante $1/2$-PROP (i.e., $\E(v_i(A_i)) \ge v_i(M)/(2n)$),
ex-post EF1, and ex-post $1/2$-MMS.

\subsection{Our Contribution}

We investigate whether ECE can be randomized to get better fairness guarantees.

We first start with the special case of two agents.
\citet{feldman2023breakingArxiv} get ex-post EFX and ex-ante EF for two agents
using the cut-and-choose protocol. However, their algorithm runs in exponential time.
In \cref{sec:n2algo}, we show how to randomize ECE to get
ex-ante envy-freeness, ex-post EFX, and ex-post $4/5$-MMS.
Our algorithm runs in $O(m\log m)$ time.%
\footnote{Independently, \cite{bu2024best} also obtained a polynomial-time algorithm
for ex-ante envy-freeness and ex-post EFX for two agents.
However, their algorithm uses very different techniques (local search).}
We also extend our result to the setting of fair division of chores
and get ex-ante envy-freeness, ex-post EFX, and ex-post $7/6$-MMS in $O(m \log m)$ time.

Next, we study the setting with more than two agents.
\citet{barman2020approximation} showed that the envy-cycle-elimination algorithm,
when combined with an ordering trick from \citet{bouveret2016characterizing},
outputs a $2/3$-MMS allocation. \citet{caragiannis2023new} showed that
this algorithm's output is also Epistemic EFX.
We believe these to be more satisfactory ex-post guarantees compared to
\citet{feldman2023breakingArxiv} ($1/2$-EFX and EF1)
and \citet{babaioff2022bestArxiv} ($1/2$-MMS and EF1).
Hence, a natural line of investigation is to see if this algorithm can be
randomized to get ex-ante proportionality.

We fell short of this goal and could not find such a randomization.
However, in \cref{sec:ce}, we showed that for three agents, no \emph{natural} randomization
of ECE can be used to get ex-ante proportionality
(we formally define \emph{natural} in \cref{sec:ce}).

\section{Preliminaries}
\label{sec:prelims}

\subsection{Fair Division Instances and Allocations}
\label{sec:prelims:basics}

A fair division instance is represented by a tuple $(N, M, (v_i)_{i \in N})$,
where $N$ is the set of agents, $M$ is the set of items,
and $v_i$ is agent $i$'s valuation function, i.e.,
$v_i(j) \in \mathbb{R}$ is agent $i$'s value for item $j \in M$.
We let valuation functions be additive, i.e., for a subset $S \subseteq M$ of items,
define $v_i(S) \defeq \sum_{j \in M} v_i(j)$.

When $v_i(j) \ge 0$ for all $i \in N$ and $j \in M$,
the items are called \emph{goods}.
When $v_i(j) \le 0$ for all $i \in N$ and $j \in M$,
the items are called \emph{chores}.

For any integer $t \ge 0$, let $[t] \defeq \{1, 2, \ldots, t\}$.
Let $n \defeq |N|$ and $m \defeq |M|$.
Often, we can assume \wLoG{} that $N = [n]$ and $M = [m]$.

An allocation is a sequence $A \defeq (A_i)_{i \in N}$, where $A_i$,
called agent $i$'s \emph{bundle}, denotes the set of items that agent $i$ gets.
Hence, $A_i \cap A_j = \emptyset$ for any two agents $i$ and $j$, and $\bigcup_{i \in N} A_i = M$.

\subsection{Notions of Fairness}
\label{sec:prelims:fnotions}

Many different notions of fairness are known.
We begin with two of the simplest notions, \emph{envy-freeness} and \emph{proportionality}.

\begin{definition}[envy-freeness]
For a fair division instance $\Ical \defeq (N, M, (v_i)_{i \in N})$,
agent $i$ is said to \emph{envy} agent $j$ in allocation $A$ if $v_i(A_i) < v_i(A_j)$.
Agent $i$ is \emph{envy-free} (EF) in $A$ if $i$ doesn't envy any other agent.
$A$ is EF if every agent is EF in $A$.
\end{definition}

\begin{definition}[proportionality]
For a fair division instance $\Ical \defeq (N, M, (v_i)_{i \in N})$,
agent $i$'s \emph{proportional share}, denoted as $\PROP_i$, equals $v_i(M)/|N|$.
Agent $i$ is \emph{PROP-satisfied} by allocation $A$ if $v_i(A_i) \ge \PROP_i$.
$A$ is \emph{proportional} (PROP) if every agent is PROP-satisfied by $A$.
\end{definition}

Envy-free and proportional allocations don't always exist
(e.g., if there are 2 agents and a single item).
Hence, relaxations of these notions have been studied.
A well-known relaxation of envy-freeness is EFX \cite{caragiannis2019unreasonable}.

\begin{definition}[EFX]
Let $\Ical \defeq (N, M, (v_i)_{i \in N})$ be a fair division instance.
When items are goods, agent $i$ is said to \emph{EFX-envy} agent $j$ in allocation $A$
if $v_i(A_j) > v_i(A_i)$ and $v_i(A_i) < v_i(A_j \setminus \{g\})$ for some $g \in A_j$ for which $v_i(g) > 0$.
When items are chores, agent $i$ is said to \emph{EFX-envy} agent $j$ in allocation $A$
if $|v_i(A_j)| < |v_i(A_i)|$ and $|v_i(A_i \setminus \{c\})| > |v_i(A_j)|$ for some $c \in A_i$ for which $v_i(c) < 0$.

Agent $i$ is \emph{EFX}-satisfied in $A$ if $i$ doesn't EFX-envy any other agent.
Allocation $A$ is envy-free up to any good (EFX) if every agent is EFX-satisfied in $A$.
\end{definition}

Despite significant efforts by the fair division community,
the existence of EFX allocations is still an open problem.
A relaxation of EFX, called EF1, has been studied extensively
(in fact, EF1 was proposed before EFX \cite{budish2011combinatorial}).

\begin{definition}[EF1]
Let $\Ical \defeq (N, M, (v_i)_{i \in N})$ be a fair division instance.
When items are goods, agent $i$ is said to \emph{EF1-envy} agent $j$ in allocation $A$
if $v_i(A_i) < v_i(A_j \setminus \{g\})$ for all $g \in A_j$.
When items are chores, agent $i$ is said to \emph{EF1-envy} agent $j$ in allocation $A$
if $|v_i(A_j)| < |v_i(A_i)|$ and $|v_i(A_i \setminus \{c\})| > |v_i(A_j)|$ for all $c \in A_i$.

Agent $i$ is \emph{EFX}-satisfied in $A$ if $i$ doesn't EFX-envy any other agent.
Allocation $A$ is envy-free up to any good (EFX) if every agent is EFX-satisfied in $A$.
\end{definition}

A popular relaxation of proportionality is Maximin Share (MMS) \cite{budish2011combinatorial}.

\begin{definition}[MMS]
Let $\Ical \defeq (N, M, (v_i)_{i \in N})$ be a fair division instance.
Let $\Pi$ be the set of all $|N|$-partitions of $M$.
Define agent $i$'s \emph{maximin share} as
\[ \MMS_i \defeq \max_{X \in \Pi}\min_{j=1}^{|N|} v_i(X_j). \]
Agent $i$ is \emph{MMS-satisfied} by allocation $A$ if $v_i(A_i) \ge \MMS_i$.
Allocation $A$ is \emph{MMS} if every agent is MMS-satisfied by $A$.
\end{definition}

\cite{kurokawa2018fair,feige2022tight} showed, through an intricate counterexample,
that MMS allocations don't always exist. This was a surprising result,
since it contradicted empirical evidence \cite{bouveret2016characterizing}.
Hence, approximations of MMS have been studied.
An allocation $A$ is $\alpha$-MMS, for some $\alpha \in \mathbb{R}$,
if $v_i(A_i) \ge \alpha\cdot\MMS_i$ for every agent $i$.

Recently, \cite{caragiannis2023new,aziz2018knowledge} proposed a notion of fairness
called \emph{Epistemic EFX}, which is a relaxation of EFX.

\begin{definition}[Epistemic EFX]
Agent $i$ is EEFX-satisfied by allocation $A$ if there exists an allocation $B$
(called agent $i$'s EEFX-certificate for $A$) such that $A_i = B_i$ and $i$ is EFX-satisfied by $B$.
An allocation $A$ is Epistemic EFX (EEFX) if every agent is EEFX-satisfied by $A$.
\end{definition}

\subsubsection{Relationships Between Fairness Notions}

The following results, which are trivial to prove, show that
some fairness notions imply other fairness notions.

\begin{observation}
\label[observation]{thm:f-impls}
In a fair division instance,
\begin{enumerate}
\item An EF allocation is also PROP and EFX.
\item $\PROP_i \ge \MMS_i$ for every agent $i$.
\item An EFX allocation is also Epistemic EFX.
\end{enumerate}
\end{observation}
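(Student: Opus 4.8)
The plan is to verify all three implications directly from the definitions; each follows by a one-line calculation, so I would simply present them in the stated order.

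For the first claim, suppose $A$ is EF, so $v_i(A_i) \ge v_i(A_j)$ for every agent $j$. To obtain PROP, I would sum this inequality over all $j \in N$ (the case $j = i$ being an equality): since the bundles $(A_j)_{j \in N}$ partition $M$ and $v_i$ is additive, $\sum_{j \in N} v_i(A_j) = v_i(M)$, so $|N| \cdot v_i(A_i) \ge v_i(M)$, i.e.\ $v_i(A_i) \ge v_i(M)/|N| = \PROP_i$. Note this sign-agnostic argument works for chores as well. To obtain EFX, I would observe that $i$ EFX-envying $j$ requires, as its first conjunct, $v_i(A_j) > v_i(A_i)$ in the goods case (resp.\ $|v_i(A_j)| < |v_i(A_i)|$ in the chores case); either condition directly contradicts $i$ being envy-free, so an EF agent cannot EFX-envy anyone.

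For the second claim, I would invoke the standard averaging bound: for any partition $X = (X_1, \ldots, X_{|N|}) \in \Pi$, the minimum bundle value is at most the average, so $\min_{j} v_i(X_j) \le \frac{1}{|N|} \sum_{j} v_i(X_j) = v_i(M)/|N| = \PROP_i$. Taking the maximum over all $X \in \Pi$ preserves this upper bound, yielding $\MMS_i = \max_{X \in \Pi} \min_j v_i(X_j) \le \PROP_i$.

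For the third claim, I would take $B = A$ as each agent's EEFX-certificate: if $A$ is EFX then every agent is EFX-satisfied by $A$ itself, and $A_i = B_i$ holds trivially, so the certificate condition is met for every $i$ and $A$ is Epistemic EFX. None of these steps presents a genuine obstacle, as the observation is stated as trivial; the only place deserving slight care is the MMS bound, where I must apply the inequality in the correct direction (min $\le$ average, \emph{then} max over partitions), and the EF-to-EFX step, where I rely on reading off the first conjunct of the EFX-envy definition rather than reasoning about the removed item.
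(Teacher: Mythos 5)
Your proposal is correct, and the paper itself gives no proof of this observation (it is stated as ``trivial to prove''), so your argument simply supplies the standard details: averaging the envy-freeness inequalities for PROP, reading off the first conjunct of the EFX-envy definition, the min-versus-average bound for $\MMS_i \le \PROP_i$, and taking $B = A$ as the EEFX-certificate. All steps check out against the paper's definitions, including the chores variants.
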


\begin{observation}
\label[observation]{thm:f-impls-n2}
In a fair division instance with only two agents,
\begin{enumerate}
\item an allocation is PROP iff it is EF.
\item an allocation is Epistemic EFX iff it is EFX.
\end{enumerate}
\end{observation}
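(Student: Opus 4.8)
The plan is to prove both equivalences by exploiting the single structural fact that makes two agents special: since a bundle pair $(A_1, A_2)$ partitions $M$, fixing one agent's bundle completely determines the other's.

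For the first claim (PROP iff EF), one direction already follows from \cref{thm:f-impls}, but the two-agent case lets me prove both directions at once by a direct rearrangement. The key step is to write $v_i(M) = v_i(A_1) + v_i(A_2)$ using additivity together with the fact that $(A_1, A_2)$ partitions $M$. Then, for agent $i$ with $\{i, j\} = \{1, 2\}$, the PROP condition $v_i(A_i) \ge v_i(M)/2$ rearranges to $2 v_i(A_i) \ge v_i(A_i) + v_i(A_j)$, i.e.\ $v_i(A_i) \ge v_i(A_j)$, which is exactly the EF condition for agent $i$. Because this is an if-and-only-if at the level of each individual agent, PROP-satisfaction and EF-satisfaction coincide agent by agent, and hence the global notions coincide as well. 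I expect no obstacle here: the computation is a one-line rearrangement valid regardless of whether the items are goods or chores, since only additivity is used.

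For the second claim (Epistemic EFX iff EFX), the forward direction (EFX implies EEFX) is immediate from \cref{thm:f-impls}. For the reverse, suppose $A$ is EEFX. The crucial observation is that with only two agents, any EEFX-certificate is forced to equal $A$ itself: if $B$ is agent $1$'s certificate, then by definition $B_1 = A_1$, and since $B$ must allocate all of $M$, we get $B_2 = M \setminus B_1 = M \setminus A_1 = A_2$, so $B = A$. Thus agent $1$ being EFX-satisfied in $B$ is literally the same as being EFX-satisfied in $A$, and the symmetric argument applies to agent $2$. Hence every agent is EFX-satisfied in $A$, so $A$ is EFX.

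The only point requiring any care — and the closest thing to an obstacle — is recognizing that the EEFX-certificate degenerates to $A$ precisely because there is no third agent to absorb a differently arranged remainder. This is exactly where the two-agent hypothesis enters, and it is what makes the a priori weaker notion of Epistemic EFX collapse to EFX in this setting.
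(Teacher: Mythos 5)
Your proposal is correct. The paper states this observation without proof (declaring it trivial), and your argument --- rearranging $v_i(A_i) \ge v_i(M)/2$ into $v_i(A_i) \ge v_i(A_j)$ via $v_i(M) = v_i(A_1) + v_i(A_2)$, and noting that with two agents any EEFX-certificate $B$ must satisfy $B = A$ since $B_j = M \setminus B_i = M \setminus A_i = A_j$ --- is exactly the intended reasoning.
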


\subsubsection{Randomized Fairness}
\label{sec:prelims:bobw}

For randomized algorithms, the definition of fairness is more nuanced.
For a fair division instance with two agents and a single good,
no allocation is envy-free or proportional,
but giving the good to a random agent would be fair since
we would be treating both agents equally.

\begin{definition}[ex-ante and ex-post fairness]
Let $\Ical \defeq (N, M, (v_i)_{i \in N})$ be a fair division instance.
Let $A$ be a random allocation (i.e., a random variable supported over allocations).
$A$ is said to be \emph{ex-ante} EF if $\E(v_i(A_i)) \ge \E(v_i(A_j))$ for every $i, j \in N$.
$A$ is said to be \emph{ex-ante} PROP if $\E(v_i(A_i)) \ge v_i(M)/|N|$ for every $i \in N$.
For any fairness notion $F$, $A$ is said to be ex-post $F$ if
every allocation in the support of $A$ is $F$.
\end{definition}

Ex-ante fairness is not enough by itself.
If we had multiple goods and gave all of them to a random agent,
that would be ex-ante EF and ex-ante PROP,
but such an allocation would be very unfair ex-post.
If we had two agents and 5 identical goods,
intuitively, we should first give 2 goods to each agent,
and then allocate the remaining good randomly.
Hence, we aim to get both ex-ante and ex-post fairness simultaneously.

\subsection{The Envy Cycle Elimination Algorithm}
\label{sec:prelims:ece}

Envy Cycle Elimination (ECE) is an algorithm for fair division of goods given by \cite{lipton2004approximately}.
See \cref{algo:ece} for a formal description.
ECE and its variations have been used in many fair division algorithms
\cite{barman2020approximation,amanatidis2020multiple,chaudhury2021little,bhaskar2021approximate,caragiannis2023new,chaudhury2024efx,berger2022almost,feldman2023breakingArxiv}.

\begin{definition}[envy cycle]
Let $A$ be an allocation for a fair division instance.
Let $C \defeq (i_1, \ldots, i_{\ell})$ be a sequence of agents such that
$i_{j-1}$ envies $i_j$ in $A$ for all $j \in [\ell]$, where $i_0 \defeq i_{\ell}$.
Then $C$ is called an \emph{envy cycle} in $A$.

\emph{Resolving} an envy cycle is the operation where we
transfer bundles in the opposite direction of the cycle.
Formally, the operation $\resolveCycle(A, C)$ returns an allocation $B$ where
$B_{i_j} \defeq A_{i_{j+1}}$ for all $j \in [\ell-1]$, $B_{i_{\ell}} \defeq A_{i_1}$,
and $B_k \defeq A_k$ for $k \not\in C$.
\end{definition}

\begin{algorithm}[htb]
\caption{$\ECE(\Ical)$:
The envy cycle elimination algorithm where
$\Ical \defeq ([n], [m], (v_i)_{i=1}^n)$ is a fair division instance for goods.}
\label{algo:ece}
\begin{algorithmic}[1]
\State Let $A \defeq (A_1, \ldots, A_n)$, where $A_i$ is initially $\emptyset$ for all $i \in [n]$.
\For{$j$ from $1$ to $m$}
    \While{there exists an envy cycle $C$ in $A$}\label{alg-line:ece:pick-ec}
        \State $A = \resolveCycle(A, C)$
    \EndWhile
    \State\label{alg-line:ece:pick-unenvied}Find an agent $i$ who is not envied by anyone in $A$.
    \State $A_i = A_i \cup \{j\}$
\EndFor
\State \Return $A$
\end{algorithmic}
\end{algorithm}

Note that there may be multiple cycles at line \ref{alg-line:ece:pick-ec} to choose from,
and multiple unenvied agents at line \ref{alg-line:ece:pick-unenvied} to choose from.
Making these choices randomly can help us randomize ECE.
Unless specified otherwise, we choose arbitrarily.

\subsection{Ordered Instances}
\label{sec:prelims:ordered}

\begin{definition}[ordered instance]
A fair division instance $(N, [m], (v_i)_{i \in N})$ is called \emph{ordered} if
$v_i(1) \ge v_i(2) \ge \ldots \ge v_i(m)$.
\end{definition}

Bouveret and Lema\^itre \shortcite{bouveret2016characterizing} devised a three-step technique to
reduce certain fair division problems to the special case of ordered instances:
\begin{enumerate}
\item Given a fair division instance $\Ical$, convert it to an ordered instance $\Icalhat$.
\item Compute a fair allocation $\Ahat$ for instance $\Icalhat$.
\item Use $\Ahat$ to compute an allocation $A$ for instance $\Ical$.
\end{enumerate}

We call their algorithm for step 1 `$\toOrdered$'
and their algorithm for step 3 `$\pickBySeq$', i.e.,
$\Ical \defeq \toOrdered(\Ical)$ and $A \defeq \pickBySeq(\Ical, \Ahat)$.

For the sake of completeness, we describe $\toOrdered$ and $\pickBySeq$
in \cref{defn:toOrd} and \cref{algo:pickBySeq}, respectively.

\begin{definition}
\label[definition]{defn:toOrd}
For the fair division instance $\Ical \defeq (N, M, (v_i)_{i \in N})$,
$\toOrdered(\Ical)$ is defined as the instance $(N, [|M|], (\vhat_i)_{i \in N})$, where
for each $i \in N$ and $j \in [|M|]$, $\vhat_i(j)$ is
the $j\Th$ largest number in the multiset $\{v_i(g) \mid g \in M\}$.
\end{definition}

\begin{algorithm}[htb]
\caption{$\pickBySeq(\Ical, \Ahat)$:
$\Ical \defeq (N, M, (v_i)_{i \in N})$ is a fair division instance.
$\Ahat$ is an allocation for instance $\toOrdered(\Ical)$.
}
\label{algo:pickBySeq}
\begin{algorithmic}[1]
\State For all $j \in [|M|]$, let $a_j$ be the agent who gets item $j$ in $\Ahat$.
\State Let $A \defeq (A_1, \ldots, A_n)$, where $A_i$ is initially $\emptyset$ for all $i \in [n]$.
\State Set $S = M$.
\For{$t$ from $1$ to $|M|$}
    \State Let $j = \argmax_{j \in S} v_{a_t}(j)$.
        \Comment{$j$ is $a_t$'s favorite item in $S$.}
    \State $A_{a_t} = A_{a_t} \cup \{j\}$. \quad $S = S \setminus \{j\}$.
        \Comment{Allocate $j$ to $a_t$.}
\EndFor
\State \Return $A$
\end{algorithmic}
\end{algorithm}

\begin{lemma}
\label[lemma]{thm:toOrd-time}
$\toOrdered$ and $\pickBySeq$ run in $O(mn\log m)$ time,
where $n$ is the number of agents and $m$ is the number of goods.
\end{lemma}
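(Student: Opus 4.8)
The plan is to analyze the two procedures separately, since each reduces to a small number of comparison sorts plus linear bookkeeping. For $\toOrdered$, by \cref{defn:toOrd}, computing $\vhat_i$ for a fixed agent $i$ is exactly the task of sorting the multiset $\{v_i(g) \mid g \in M\}$ into non-increasing order: $\vhat_i(j)$ is then read off as the $j\Th$ entry. This is one sort of $m$ numbers, costing $O(m\log m)$, and repeating it for all $n$ agents costs $O(nm\log m)$, which dominates the $O(nm)$ time to read the input and write the output. Hence $\toOrdered$ runs in $O(nm\log m)$ time.

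For $\pickBySeq$, the only step whose cost is not trivially linear is the repeated evaluation of $j = \argmax_{j \in S} v_{a_t}(j)$ inside the loop. Evaluating this naively rescans all of $S$ each iteration, giving $\Theta(m)$ per iteration and $\Theta(m^2)$ overall; this can exceed the target whenever $n = o(m/\log m)$, so a smarter implementation is needed. I would precompute, for each agent $i$, a list $L_i$ of all items sorted in non-increasing order of $v_i$ (again $n$ sorts, costing $O(nm\log m)$ in total), maintain a Boolean array recording which items are still in $S$, and keep for each agent $i$ a pointer $p_i$ into $L_i$ initialized at the front.

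When it is agent $a_t$'s turn, I advance $p_{a_t}$ past any items already removed from $S$, stopping at the first item $j$ still present; because $L_{a_t}$ is sorted by $v_{a_t}$, this $j$ is precisely $\argmax_{j \in S} v_{a_t}(j)$. I then allocate $j$ to $a_t$, mark it removed, and advance $p_{a_t}$ one further step. Correctness is immediate from the sortedness of $L_{a_t}$ together with the fact that items leave $S$ only upon allocation. For the running time, each pointer $p_i$ moves only forward and never past position $m$, so the total number of advances of $p_i$ across all of agent $i$'s turns is $O(m)$; summing over agents bounds the total work of all $\argmax$ queries by $O(nm)$, which is dominated by the $O(nm\log m)$ sorting cost. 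Thus $\pickBySeq$ also runs in $O(nm\log m)$ time.

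The step I expect to require the most care is the amortized bound on the pointers in $\pickBySeq$: I must be explicit that each agent's pointer is a single persistent object shared across all of that agent's turns (rather than reset per turn), so that its total forward movement telescopes to $O(m)$, and I must verify that skipping already-allocated items never causes the true favorite to be missed — which holds exactly because $L_i$ is fixed in sorted order and membership in $S$ is monotone, i.e. items only ever leave $S$.
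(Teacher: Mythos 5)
Your proof is correct. The $\toOrdered$ half coincides with the paper's argument (one comparison sort per agent, $O(nm\log m)$ in total). For $\pickBySeq$ you take a genuinely different implementation route: the paper keeps, for each agent $i$, a balanced binary search tree over the remaining items keyed by $(v_i(j), j)$, answers each turn's $\argmax$ query with a find-max in $O(\log m)$, and then deletes the allocated item from all $n$ trees at a cost of $O(n\log m)$ per round, for $O(nm\log m)$ over the $m$ rounds. You instead precompute a static sorted list per agent, do lazy deletion via a shared Boolean membership array, and use a forward-only pointer per agent with an amortized (telescoping) bound of $O(m)$ total movement per pointer, so the entire selection loop costs only $O(nm)$ and the $n$ initial sorts dominate. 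Both implementations meet the claimed $O(mn\log m)$ bound; yours is more elementary (no dynamic data structures beyond arrays) and makes the loop itself cheaper, while the paper's BST version supports deletions eagerly and avoids the amortized argument. Your correctness justification --- that membership in $S$ is monotone decreasing, so the first surviving entry of the sorted list is the true favorite and the pointer never needs to back up --- is exactly the point that needs to be made, and you make it.
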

\begin{proof}
$\toOrdered$:
For each agent $i \in [n]$, we can compute $(\vhat_i(j))_{j \in [m]}$ in $O(m \log m)$ time
by sorting the values $\{v_i(g) \mid g \in [m]\}$.

$\pickBySeq$:
Each agent $i \in [n]$ maintains a balanced binary search tree $T_i$.
At any point in time, let $S$ be the set of remaining (unallocated) items.
Then $T_i$ contains $\{(v_i(j), j): j \in S\}$, where pairs are compared lexicographically.

Initially, $S = [m]$, so $T_i$ can be initialized in $O(m \log m)$ time for each $i \in [n]$.
In each round, if it's agent $\ihat$'s turn to pick,
she finds the largest element $(v_{\ihat}(\jhat), \jhat)$ in $T_{\ihat}$ in $O(\log m)$ time.
Then for each agent $i$, we update $T_i$ by deleting $(v_i(\jhat), \jhat)$ in $O(\log m)$ time.
Hence, the time in each round is $O(n\log m)$. Since there are $m$ rounds,
the total running time of $\pickBySeq$ is $O(mn\log m)$.
\end{proof}

The following result shows that if the allocation $\Ahat$ is fair,
then allocation $A$ is also fair.

\begin{lemma}
\label[lemma]{thm:ord-inc-self}
\label[lemma]{thm:efx-ord-eefx}
Let $\Ical \defeq (N, M, (v_i)_{i \in N})$ and
$\Icalhat \defeq \toOrdered(\Ical) = (N, [|M|], (\vhat_i)_{i \in N})$.
Let $\Ahat$ be an allocation for $\Icalhat$
and $A \defeq \pickBySeq(\Ical, \Ahat)$. Then
\begin{enumerate}
\item $v_i(A_i) \ge \vhat_i(\Ahat_i)$ for all $i \in N$.
\item If $\Ahat$ is EFX, then $A$ is epistemic EFX.
\item For all $i \in N$, $i$'s maximin share is the same in $\Ical$ and $\Icalhat$.
\end{enumerate}
\end{lemma}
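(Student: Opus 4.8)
The plan is to prove the three parts in order, with Part~1 serving as the engine for the harder Part~2.

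\textbf{Part 1.}
First I would record two bookkeeping facts about $\pickBySeq$. Since each step $t \in [|M|]$ lets agent $a_t$ take exactly one item, agent $i$ picks exactly $|\Ahat_i|$ times, namely at the steps $t$ with $a_t = i$; these steps form the set $\Ahat_i = \{j_1 < j_2 < \cdots < j_k\}$, and in particular $|A_i| = |\Ahat_i|$. Let $g_\ell$ be the item $i$ takes at step $j_\ell$. The key claim is the item-by-item bound $v_i(g_\ell) \ge \vhat_i(j_\ell)$ for every $\ell$. To see it, note that at the start of step $j_\ell$ exactly $j_\ell - 1$ items have been removed, so among the $j_\ell$ items that $i$ values most (whose values are $\vhat_i(1) \ge \cdots \ge \vhat_i(j_\ell)$) at least one is still available; since $g_\ell$ is $i$'s favorite available item, $v_i(g_\ell)$ is at least the value of that survivor, which is at least $\vhat_i(j_\ell)$. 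Summing over $\ell$ gives $v_i(A_i) = \sum_\ell v_i(g_\ell) \ge \sum_\ell \vhat_i(j_\ell) = \vhat_i(\Ahat_i)$.

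\textbf{Part 3.}
This is immediate from the fact that $\MMS_i$ depends only on the multiset of item-values $\{v_i(g) : g \in M\}$ and on $|N|$. By definition, $\toOrdered$ merely sorts each agent's values, so $\{\vhat_i(p) : p \in [|M|]\} = \{v_i(g) : g \in M\}$ as multisets; fixing a value-preserving bijection between $M$ and $[|M|]$ turns any $|N|$-partition of one ground set into an equal-value $|N|$-partition of the other, so the two $\max$--$\min$ quantities coincide.

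\textbf{Part 2.}
Fix $i$; I must exhibit an allocation $B$ with $B_i = A_i$ in which $i$ is EFX-satisfied. Sort the real items by $i$'s value as $e_1, \ldots, e_m$, so that $v_i(e_p) = \vhat_i(p)$. Let the positions owned by the others be $P \defeq [m] \setminus \Ahat_i = \{p_1 < \cdots < p_{m-k}\}$, and sort the leftover items $R \defeq M \setminus A_i$ as $r_1, \ldots, r_{m-k}$ with $v_i(r_1) \ge \cdots \ge v_i(r_{m-k})$. The crucial structural fact, derived from Part~1, is the majorization $v_i(r_s) \le \vhat_i(p_s)$ for every $s$. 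I would prove it by a threshold count: for any $\theta$, the number of leftover items of value $\ge \theta$ equals $c(\theta) - |\{\ell : v_i(g_\ell) \ge \theta\}|$ and the number of positions of $P$ with $\vhat_i \ge \theta$ equals $c(\theta) - |\{\ell : \vhat_i(j_\ell) \ge \theta\}|$, where $c(\theta) \defeq |\{p : \vhat_i(p) \ge \theta\}|$; the Part~1 bound $v_i(g_\ell) \ge \vhat_i(j_\ell)$ makes the former count no larger than the latter for every $\theta$, which is exactly the condition for the sorted assignment $r_s \mapsto p_s$ to satisfy $v_i(r_s) \le \vhat_i(p_s)$ (and it handles ties). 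I then define $B$ by placing leftover item $r_s$ at position $p_s$, i.e.\ giving agent $j \ne i$ the items $\{r_s : p_s \in \Ahat_j\}$; this is a valid allocation with $B_i = A_i$.

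It remains to check that $i$ is EFX-satisfied in $B$, and this is the step I expect to be the real obstacle, because EFX is governed by the \emph{smallest} positive good of a bundle, not an arbitrary one. Two facts drive the verification. By the majorization, $v_i(B_j) = \sum_{p \in \Ahat_j} v_i(r_{s(p)}) \le \sum_{p \in \Ahat_j}\vhat_i(p) = \vhat_i(\Ahat_j)$, where $s(p)$ is the rank of $p$ in $P$; and because $r_s \mapsto p_s$ is order-reversing, within each $B_j$ the real-item values are nonincreasing in the position $p$. Hence, letting $p(g^+)$ be the largest position of $\Ahat_j$ carrying a positive real value and $g^+ \defeq r_{s(p(g^+))}$, all items of $B_j$ at positions beyond $p(g^+)$ have value $0$, so $g^+$ is a smallest positive good of $B_j$. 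Writing $p^\star \defeq \max\{p \in \Ahat_j : \vhat_i(p) > 0\}$, the EFX guarantee of $\Ahat$ for $i$ gives $\vhat_i(\Ahat_j) - \vhat_i(p^\star) \le \vhat_i(\Ahat_i) \le v_i(A_i)$, and $v_i(r_{s(p)}) \le \vhat_i(p)$ forces $p(g^+) \le p^\star$. I would finish with $v_i(B_j) - v_i(g^+) = \sum_{p \in \Ahat_j,\, p < p(g^+)} v_i(r_{s(p)}) \le \sum_{p \in \Ahat_j,\, p \ne p^\star}\vhat_i(p) = \vhat_i(\Ahat_j) - \vhat_i(p^\star) \le v_i(A_i) = v_i(B_i)$, where the middle step uses $\{p \in \Ahat_j : p < p(g^+)\} \subseteq \{p \in \Ahat_j : p \ne p^\star\}$ together with $\vhat_i \ge 0$. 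Thus $v_i(B_i) \ge v_i(B_j \setminus \{g\})$ for the smallest positive $g$, hence for every positive $g \in B_j$ (bundles with no positive good being trivially fine), so $i$ does not EFX-envy any $j$. As $i$ was arbitrary, $A$ is Epistemic EFX.
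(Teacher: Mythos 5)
Your argument is correct, but it is not the paper's argument: the paper proves Claims 1 and 2 purely by citation to Lemmas 2 and 3 of \cite{caragiannis2023new}, and handles Claim 3 with the same one-line multiset observation you give in your Part 3, so what you have written is a self-contained reconstruction of the outsourced results rather than an alternative to a proof the paper actually spells out. Your Part 1 is the standard picking-sequence bound (only $j_\ell - 1$ items are gone before agent $i$'s $\ell$-th pick, so one of her top-$j_\ell$ items survives and greedy selection does at least as well) and is fine. Part 2 is where the real content lies, and your construction checks out: the threshold-counting argument is exactly the sorted-dominance criterion and does establish the majorization $v_i(r_s) \le \vhat_i(p_s)$; the monotone placement makes $g^+$ a minimum-value positive good of $B_j$ and, via $v_i(r_{s(p)}) \le \vhat_i(p)$, forces $p(g^+) \le p^\star$; and the final chain of inequalities correctly reduces EFX of $B$ for agent $i$ to the EFX guarantee of $\Ahat$ combined with Part 1, with the no-positive-good case correctly dismissed. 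The one caveat is that the lemma is also invoked for chores (\cref{thm:eceCN2b}), and your Part 2 uses nonnegativity of values in an essential way (items at positions beyond $p(g^+)$ have value exactly $0$, and the inclusion $\{p < p(g^+)\} \subseteq \{p \ne p^\star\}$ only helps because $\vhat_i \ge 0$), so the chores version of Claim 2 would need a separate mirror-image argument --- though the paper itself dispenses with this by saying only that ``the proof for chores is similar,'' so you are no worse off than the authors on that point.
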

\begin{proof}
Claims 1 and 2 follow from Lemmas 2 and 3 of \cite{caragiannis2023new} for goods.
Claim 3 is true because $\toOrdered$ only permutes the values of the goods.
The set of values remains the same.
The proof for chores is similar.
\end{proof}

\section{Efficiently Dividing Goods Among Two Agents}
\label{sec:n2algo}

We give a randomized algorithm for the fair division of goods among two agents
whose output is ex-post EFX and ex-ante EF.
The algorithm runs in $O(m \log m)$ time, where $m$ is the number of goods.
The algorithm can be viewed as a randomization of Envy-Cycle-Elimination.

\subsection{Ordered Instances}

We first give an algorithm called $\ECEGTwo$ (\cref{algo:eceGN2}) for ordered instances.
(We show how to extend it to arbitrary instances later.)
We maintain two allocations, $A$ and $B$, which are both initially empty.
The algorithm proceeds in rounds. In each round, we first resolve envy cycles in both $A$ and $B$.
Then we find an unenvied agent $i_A$ in $A$ and an unenvied agent $i_B$ in $B$
such that $i_A \neq i_B$.
We then allocate a good to $i_A$ in $A$ and the same good to $i_B$ in $B$.

\begin{algorithm}[htb]
\caption{$\ECEGTwo(\Ical)$: Algorithm for dividing goods among two agents.
\\ \textbf{Input}: A fair division instance $\Ical \defeq ([2], [m], (v_1, v_2))$.
\\ \textbf{Output}: A pair of allocations.}
\label{algo:eceGN2}
\begin{algorithmic}[1]
\State $A_1 = A_2 = B_1 = B_2 = \emptyset$.
\For{$g$ from $1$ to $m$}\label{alg-line:eceGN2:oloop}
    \For{$C$ in $\{A, B\}$}
        \State $S_C = \{i \in [2]: v_{3-i}(C_{3-i}) \ge v_{3-i}(C_i)\}$.
            \Comment{Set of unenvied agents in allocation $C$.}
    \EndFor
    \If{$1 \in S_A$ and $2 \in S_B$}
        \State $A_1 = A_1 \cup \{g\}$.\quad $B_2 = B_2 \cup \{g\}$.
    \ElsIf{$2 \in S_A$ and $1 \in S_B$}
        \State $A_2 = A_2 \cup \{g\}$.\quad $B_1 = B_1 \cup \{g\}$.
    \Else
        \State \label{alg-line:eceGN2:error} \textbf{error}
    \EndIf\label{alg-line:eceGN2:postAlloc}
    \For{$C$ in $\{A, B\}$}
        \If{$v_1(C_2) > v_1(C_1)$ and $v_2(C_1) > v_2(C_2)$}
            \Comment{There is an envy cycle in $C$}
            \State \label{alg-line:eceGN2:ece}Swap $C_1$ and $C_2$.
                \Comment{Resolve the envy cycle.}
        \EndIf
    \EndFor
\EndFor
\State \Return $((A_1, A_2), (B_1, B_2))$.
\end{algorithmic}
\end{algorithm}

\begin{lemma}
\label[lemma]{thm:eceGN2-time}
$\ECEGTwo$ (\cref{algo:eceGN2}) runs in $O(m)$ time, where $m$ is the number of goods.
\end{lemma}
\begin{proof}
Throughout the algorithm, we maintain $v_i(C_j)$ for each $i \in [2]$, $j \in [2]$, $C \in \{A, B\}$.
We can update these in $O(1)$ time whenever any good is added to $C_j$.
There are $m$ rounds, and each round can be implemented in $O(1)$ time.
\end{proof}

We show that the average value any agent $i$ gets from $\ECEGTwo$'s output
is at least her proportional share.

\begin{lemma}
\label[lemma]{thm:eceGN2-prop}
For any fair division instance $\Ical \defeq ([2], [m], (v_1, v_2))$,
$\ECEGTwo(\Ical)$ never throws an error (line \ref{alg-line:eceGN2:error}),
and its output $(A, B)$ satisfies $v_i(A_i) + v_i(B_i) \ge v_i([m])$ for all $i \in [2]$.
\end{lemma}
\begin{proof}
For $0 \le t \le m$, define the predicate $P(t)$ as:
\\ \textsl{$\ECEGTwo(\Ical)$ never throws an error in the first $t$ rounds
    and satisfies $v_i(A_i) + v_i(B_i) \ge v_i([t])$
    for all $i \in [2]$ after the first $t$ rounds.}

Note that the condition $v_i(A_i) + v_i(B_i) \ge v_i([t])$
is equivalent to $v_i(A_i) + v_i(B_i) \ge v_i(A_{3-i}) + v_i(B_{3-i})$,
since $A_i \cup A_{3-i} = B_i \cup B_{3-i} = [t]$ immediately after $t$ rounds.

We will prove $P(t)$ for all $t$ using induction.
$P(0)$ is trivially true. Now assume $P(t-1)$ for some $t \in [m]$.
Let $(A^{(1)}, B^{(1)})$ be the value of $(A, B)$ at the beginning of the $t\Th$ iteration.

Suppose $\ECEGTwo$ throws an error in the $t\Th$ iteration.
Then $S_A = S_B = \{i\}$ for some $i \in [2]$ in that iteration. Let $j \defeq 3-i$.
Then $i$ envies $j$ in both $A^{(1)}$ and $B^{(1)}$, i.e.,
$v_i(A^{(1)}_i) < v_i(A^{(1)}_j)$ and $v_i(B^{(1)}_i) < v_i(B^{(1)}_j)$.
But this contradicts $P(t-1)$. Hence, $\ECEGTwo$ doesn't throw an error in the $t\Th$ iteration.

In the $t\Th$ round, let $(A^{(2)}, B^{(2)})$ be the value of $(A, B)$ after line \ref{alg-line:eceGN2:postAlloc}
and let $(A^{(3)}, B^{(3)})$ be the value of $(A, B)$ at the end of the round.
Suppose agent $i$ gets good $t$ in $A$ and agent $j$ gets good $t$ in $B$, where $i \neq j$.
Then $v_i(A^{(2)}_i) + v_i(B^{(2)}_i) = v_i(A^{(1)}_i) + v_i(B^{(1)}_i) + v_i(t) \ge v_i([t])$.
Moreover, envy cycle elimination (line \ref{alg-line:eceGN2:ece}) increases each agent's value
for her own bundle, so $v_i(C^{(3)}_i) \ge v_i(C^{(2)}_i)$ for all $C \in \{A, B\}$.
Hence, $v_i(A^{(3)}_i) + v_i(B^{(3)}_i) \ge v_i([t])$.
Hence, $P(t)$ holds.
By mathematical induction, we get that $P(m)$ is true.
\end{proof}

\begin{lemma}
\label[lemma]{thm:eceGN2-efx}
Let $(A, B) = \ECEGTwo(\Ical)$, where $\Ical$ is an ordered fair division instance.
Then $A$ and $B$ are EFX and $4/5$-MMS.
\end{lemma}
\begin{proof}
Lemma 3.5 in \cite{barman2020approximation} says that the output of
the Envy-Cycle-Elimination algorithm on an ordered instance is EFX.
Since $\ECEGTwo$ is equivalent to running two instances of the Envy-Cycle-Elimination algorithm
in parallel, $A$ and $B$ are EFX.
$A$ and $B$ are $4/5$-MMS by Theorem 3.1 in \cite{barman2020approximation}.
\end{proof}

Hence, if we pick one of the allocations output by $\ECEGTwo$ on an ordered instance at random,
the output is ex-post EFX (by \cref{thm:eceGN2-efx}) and ex-ante EF (by \cref{thm:eceGN2-prop}).
(Recall that when there are only two agents, an allocation is EF iff it is PROP.)

\subsection{Extending Beyond Ordered Instances}

We obtain \cref{algo:eceGN2b} by combining algorithm $\ECEGTwo$
with the technique in \cref{sec:prelims:ordered}.

\begin{algorithm}[htb]
\caption{Randomized algorithm for dividing goods among two agents.
Takes a fair division instance $\Ical$ as input.}
\label{algo:eceGN2b}
\begin{algorithmic}[1]
\State $\Icalhat = \toOrdered(\Ical)$.
\State $(\Ahat, \Bhat) = \ECEGTwo(\Icalhat)$.
\State $A = \pickBySeq(\Ical, \Ahat)$.
\State $B = \pickBySeq(\Ical, \Bhat)$.
\State \Return an allocation uniformly at random from $\{A, B\}$.
\end{algorithmic}
\end{algorithm}

\begin{theorem}
\label[theorem]{thm:eceGN2b}
The output of \cref{algo:eceGN2b} is ex-post EFX, ex-post $4/5$-MMS, and ex-ante EF.
\end{theorem}
\begin{proof}
Let $\Ical \defeq ([2], [m], (v_1, v_2))$ be the input to \cref{algo:eceGN2b}.
Let $\Icalhat \defeq ([2], [m], (\vhat_1, \vhat_2))$.
$\Ahat$ and $\Bhat$ are EFX and $4/5$-MMS for $\Icalhat$ by \cref{thm:eceGN2-efx}.
By \cref{thm:efx-ord-eefx,thm:f-impls-n2}, $A$ and $B$ are also EFX and $4/5$-MMS.
By \cref{thm:ord-inc-self,thm:eceGN2-prop},
$v_i(A_i) + v_i(B_i) \ge \vhat_i(\Ahat_i) + \vhat_i(\Bhat_i) \ge v_i([m])$ for all $i \in [2]$.
This implies $v_i(A_i) + v_i(B_i) \ge v_i(A_{3-i}) + v_i(B_{3-i})$ for all $i \in [2]$.
Hence, picking uniformly randomly from $\{A, B\}$ is ex-post EFX and ex-ante EF.
\end{proof}

\begin{lemma}
\Cref{algo:eceGN2b} runs in $O(m\log m)$ time.
\end{lemma}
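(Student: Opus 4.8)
The plan is to bound the running time of each of the five lines of \cref{algo:eceGN2b} separately and then add the bounds, relying entirely on the timing lemmas already established. The crucial observation is that here the number of agents $n = |N| = 2$ is a constant, so every bound of the form $O(mn\log m)$ collapses to $O(m\log m)$. This is really the only idea in the argument.

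First I would handle line 1: computing $\Icalhat = \toOrdered(\Ical)$ takes $O(m\log m)$ time by \cref{thm:toOrd-time} with $n = 2$. Next, line 2 runs $\ECEGTwo(\Icalhat)$, which takes $O(m)$ time by \cref{thm:eceGN2-time}. Lines 3 and 4 each invoke $\pickBySeq$, which again by \cref{thm:toOrd-time} with $n = 2$ costs $O(m\log m)$ apiece. Finally, line 5 draws a single uniformly random bit and returns the corresponding allocation, which costs at most $O(m)$ (dominated by the cost of emitting the allocation), or $O(1)$ if the allocation is returned by reference. Summing these contributions gives $O(m\log m) + O(m) + 2\cdot O(m\log m) + O(m) = O(m\log m)$, as claimed.

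I do not anticipate any genuine obstacle: the whole argument is bookkeeping layered on top of \cref{thm:toOrd-time,thm:eceGN2-time}. The only point that deserves a moment's care is making the substitution $n = 2$ explicit, so that the reader sees precisely why the factor of $n$ appearing in \cref{thm:toOrd-time} disappears from the final bound.
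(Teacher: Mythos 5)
Your proof is correct and follows exactly the same route as the paper, which simply cites \cref{thm:eceGN2-time,thm:toOrd-time}; you have merely spelled out the line-by-line bookkeeping and the substitution $n=2$ that the paper leaves implicit.
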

\begin{proof}
Follows from \cref{thm:eceGN2-time,thm:toOrd-time}.
\end{proof}

\subsection{Chores}

Our results for two agents can easily be adapted to the chores setting.
We first modify $\ECEGTwo$ (\cref{algo:eceGN2}) to get $\ECECTwo$ (\cref{algo:eceCN2}).

\begin{algorithm}[htb]
\caption{$\ECECTwo(\Ical)$: Algorithm for dividing chores among two agents.
\\ \textbf{Input}: A fair division instance $\Ical \defeq ([2], [m], (v_1, v_2))$.
\\ \textbf{Output}: A pair of allocations.}
\label{algo:eceCN2}
\begin{algorithmic}[1]
\State $A_1 = A_2 = B_1 = B_2 = \emptyset$.
\For{$c$ from $m$ to $1$}\label{alg-line:eceCN2:oloop}
    \For{$C$ in $\{A, B\}$}
        \State $S_C = \{i \in [2]: |v_i(C_i)| \le |v_i(C_{3-i})|\}$.
            \Comment{Set of envy-free agents in allocation $C$.}
    \EndFor
    \If{$1 \in S_A$ and $2 \in S_B$}
        \State $A_1 = A_1 \cup \{c\}$.\quad $B_2 = B_2 \cup \{c\}$.
    \ElsIf{$2 \in S_A$ and $1 \in S_B$}
        \State $A_2 = A_2 \cup \{c\}$.\quad $B_1 = B_1 \cup \{c\}$.
    \Else
        \State \label{alg-line:eceCN2:error} \textbf{error}
    \EndIf\label{alg-line:eceCN2:postAlloc}
    \For{$C$ in $\{A, B\}$}
        \If{$|v_1(C_2)| < |v_1(C_1)|$ and $|v_2(C_1)| < |v_2(C_2)|$}
            \Comment{$\exists$ an envy cycle in $C$}
            \State \label{alg-line:eceCN2:ece}Swap $C_1$ and $C_2$.
                \Comment{Resolve the envy cycle.}
        \EndIf
    \EndFor
\EndFor
\State \Return $((A_1, A_2), (B_1, B_2))$.
\end{algorithmic}
\end{algorithm}

\begin{lemma}
\label[lemma]{thm:eceCN2-time}
$\ECECTwo$ (\cref{algo:eceCN2}) runs in $O(m)$ time, where $m$ is the number of chores.
\end{lemma}
\begin{proof}
Throughout the algorithm, we maintain $v_i(C_j)$ for each $i \in [2]$, $j \in [2]$, $C \in \{A, B\}$.
We can update these in $O(1)$ time whenever any chore is added to $C_j$.
There are $m$ rounds, and each round can be implemented in $O(1)$ time.
\end{proof}

\begin{lemma}
\label[lemma]{thm:eceCN2-prop}
For any fair division instance $\Ical \defeq ([2], [m], (v_1, v_2))$,
$\ECECTwo(\Ical)$ never throws an error (line \ref{alg-line:eceCN2:error}),
and its output $(A, B)$ satisfies $|v_i(A_i)| + |v_i(B_i)| \le |v_i([m])|$ for all $i \in [2]$.
\end{lemma}
\begin{proof}
For $0 \le t \le m$, define the predicate $P(t)$ as:
\\ \textsl{$\ECECTwo(\Ical)$ never throws an error in the first $t$ rounds
    and satisfies $|v_i(A_i)| + |v_i(B_i)| \le |v_i([m] \setminus [m-t])|$
    for all $i \in [2]$ after the first $t$ rounds.}

Note that the condition $|v_i(A_i)| + |v_i(B_i)| \le |v_i([m] \setminus [m-t])|$
is equivalent to $|v_i(A_i)| + |v_i(B_i)| \le |v_i(A_{3-i})| + |v_i(B_{3-i})|$,
since $A_i \cup A_{3-i} = B_i \cup B_{3-i} = [m] \setminus [m-t]$ immediately after $t$ rounds.

We will prove $P(t)$ for all $t$ using induction.
$P(0)$ is trivially true. Now assume $P(t-1)$ for some $t \in [m]$.
Let $(A^{(1)}, B^{(1)})$ be the value of $(A, B)$ at the beginning of the $t\Th$ iteration.

Suppose $\ECECTwo$ throws an error in the $t\Th$ iteration.
Then $S_A = S_B = \{i\}$ for some $i \in [2]$ in that iteration. Let $j \defeq 3-i$.
Then $j$ envies $i$ in both $A^{(1)}$ and $B^{(1)}$, i.e.,
$|v_j(A^{(1)}_i)| < |v_j(A^{(1)}_j)|$ and $|v_j(B^{(1)}_i)| < |v_j(B^{(1)}_j)|$.
But this contradicts $P(t-1)$. Hence, $\ECECTwo$ doesn't throw an error in the $t\Th$ iteration.

In the $t\Th$ round, let $(A^{(2)}, B^{(2)})$ be the value of $(A, B)$ after line \ref{alg-line:eceCN2:postAlloc}
and let $(A^{(3)}, B^{(3)})$ be the value of $(A, B)$ at the end of the round.
Suppose agent $i$ gets chore $m-t+1$ in $A$ and agent $j$ gets chore $m-t+1$ in $B$, where $i \neq j$.
Then $|v_i(A^{(2)}_i)| + |v_i(B^{(2)}_i)| = |v_i(A^{(1)}_i)| + |v_i(B^{(1)}_i)| + |v_i(m-t+1)| \ge |v_i([m] \setminus [m-t])|$.
Moreover, envy cycle elimination (line \ref{alg-line:eceCN2:ece}) decreases each agent's disutility
for her own bundle, so $|v_i(C^{(3)}_i)| \le |v_i(C^{(2)}_i)|$ for all $C \in \{A, B\}$.
Hence, $|v_i(A^{(3)}_i)| + |v_i(B^{(3)}_i)| \ge |v_i([m] \setminus [m-t])|$.
Hence, $P(t)$ holds.
By mathematical induction, we get that $P(m)$ is true.
\end{proof}

\begin{lemma}
\label[lemma]{thm:eceCN2-efx}
Let $(A, B) = \ECECTwo(\Ical)$, where $\Ical$ is an ordered fair division instance.
Then $A$ and $B$ are EFX and $7/6$-MMS.
\end{lemma}
\begin{proof}
The proof for EFX is similar to \cref{thm:eceGN2-efx}.
$A$ and $B$ are $7/6$-MMS by Theorem A.2 of \cite{barman2020approximation}.
\end{proof}

Hence, if we pick one of the allocations output by $\ECECTwo$ on an ordered instance at random,
the output is ex-post EFX (by \cref{thm:eceCN2-efx}) and ex-ante EF (by \cref{thm:eceCN2-prop}).
(Recall that when there are only two agents, an allocation is EF iff it is PROP.)

We can get an algorithm for non-ordered instances (\cref{algo:eceCN2b}) by
combining algorithm $\ECECTwo$ with the technique in \cref{sec:prelims:ordered}.

\begin{algorithm}[htb]
\caption{Randomized algorithm for dividing chores among two agents.
Takes a fair division instance $\Ical$ as input.}
\label{algo:eceCN2b}
\begin{algorithmic}[1]
\State $\Icalhat = \toOrdered(\Ical)$.
\State $(\Ahat, \Bhat) = \ECECTwo(\Icalhat)$.
\State $A = \pickBySeq(\Ical, \Ahat)$.
\State $B = \pickBySeq(\Ical, \Bhat)$.
\State \Return an allocation uniformly at random from $\{A, B\}$.
\end{algorithmic}
\end{algorithm}

\begin{theorem}
\label[theorem]{thm:eceCN2b}
The output of \cref{algo:eceCN2b} is ex-post EFX, ex-post $7/6$-MMS, and ex-ante EF.
\end{theorem}
\begin{proof}
Let $\Ical \defeq ([2], [m], (v_1, v_2))$ be the input to \cref{algo:eceCN2b}.
Let $\Icalhat \defeq ([2], [m], (\vhat_1, \vhat_2))$.
$\Ahat$ and $\Bhat$ are EFX and $7/6$-MMS for $\Icalhat$ by \cref{thm:eceCN2-efx}.
By \cref{thm:efx-ord-eefx,thm:f-impls-n2}, $A$ and $B$ are also EFX and $7/6$-MMS.
By \cref{thm:ord-inc-self,thm:eceCN2-prop},
$|v_i(A_i)| + |v_i(B_i)| \le |\vhat_i(\Ahat_i)| + |\vhat_i(\Bhat_i)| \le |v_i([m])|$ for all $i \in [2]$.
This implies $|v_i(A_i)| + |v_i(B_i)| \le |v_i(A_{3-i})| + |v_i(B_{3-i})|$ for all $i \in [2]$.
Hence, picking uniformly randomly from $\{A, B\}$ is ex-post EFX and ex-ante EF.
\end{proof}

\begin{lemma}
\Cref{algo:eceCN2b} runs in $O(m\log m)$ time.
\end{lemma}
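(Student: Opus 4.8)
The final statement to prove is the running-time lemma for \Cref{algo:eceCN2b}, the randomized chore-division algorithm for two agents. The proof plan is essentially identical in structure to the analogous goods lemma proven just after \cref{thm:eceGN2b}, since \cref{algo:eceCN2b} has exactly the same four-step skeleton as \cref{algo:eceGN2b}: call $\toOrdered$, run the parallel envy-cycle routine (here $\ECECTwo$ rather than $\ECEGTwo$), apply $\pickBySeq$ twice, and then return a uniformly random choice from the two resulting allocations.

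The plan is to bound the cost of each of these steps separately and then sum them. First I would invoke \cref{thm:toOrd-time}, which states that both $\toOrdered$ and $\pickBySeq$ run in $O(mn\log m)$ time; specializing to $n=2$ agents, each of these three invocations (one $\toOrdered$ and two $\pickBySeq$ calls) costs $O(m\log m)$. Next I would invoke \cref{thm:eceCN2-time}, which establishes that $\ECECTwo$ runs in $O(m)$ time. Finally, the concluding step of returning a uniformly random allocation from the two-element set $\{A, B\}$ requires only a single coin flip and costs $O(1)$ time. Adding these contributions, the dominant term is $O(m\log m)$, which gives the claimed bound.

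There is essentially no obstacle here: the lemma is a routine bookkeeping consequence of two previously established running-time results, exactly mirroring the goods case, and the author's one-line proof (citing \cref{thm:eceCN2-time} together with \cref{thm:toOrd-time}) confirms this. The only point deserving a moment's care is confirming that the $n$ in \cref{thm:toOrd-time} is instantiated to the constant $2$, so that the $n\log m$ factor collapses to $\log m$ and does not contribute an extra dependence; once that is noted, the proof is immediate.

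\begin{proof}
By \cref{thm:toOrd-time} with $n = 2$, the single call to $\toOrdered$ and each of the two calls to $\pickBySeq$ run in $O(m\log m)$ time. By \cref{thm:eceCN2-time}, the call to $\ECECTwo$ runs in $O(m)$ time. Returning a uniformly random allocation from $\{A, B\}$ takes $O(1)$ time. Summing these, the total running time is dominated by the $O(m\log m)$ terms, so \cref{algo:eceCN2b} runs in $O(m\log m)$ time.
\end{proof}
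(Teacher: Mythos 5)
Your proposal is correct and takes the same route as the paper, which simply cites \cref{thm:eceCN2-time} and \cref{thm:toOrd-time}; you merely spell out the per-step accounting (with $n=2$ collapsing the $O(mn\log m)$ bound to $O(m\log m)$) that the paper leaves implicit.
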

\begin{proof}
Follows from \cref{thm:eceCN2-time,thm:toOrd-time}.
\end{proof}

\section{Hard Input for Three Agents}
\label{sec:ce}

In this section, we describe a family of ordered fair division instances
with 3 agents and 9 goods for which many natural ways of randomizing the
Envy-Cycle-Elimination (ECE) algorithm fail to give ex-ante PROP.

In this section, we use \cref{algo:ece} as our definition of the ECE algorithm.
(There are some variants of \cref{algo:ece}, like the one in \cite{feldman2023breakingArxiv},
that resolve cycles only until an unenvied agent appears, instead of resolving all envy cycles.
We do not study such variants in this work.)

To create a difficult instance for ECE, we start with 3 agents and 9 goods of values
3, 3, 3, 2, 1, 1, 1, 1, 1. We then carefully perturb these values to ensure that
ECE consistently favors one of the agents. The instance is given by \cref{ex:goods-ce}.
Note that good 4 is the only good for which agents have different values.

\begin{table}[htb]
\centering
\caption{Example instance for $n=3$.
Here $\delta \in (0, 1/4]$ and $\eps \in (0, \delta/3]$.}
\label{ex:goods-ce}
\begin{tabular}{cccccccccc}
\toprule $g$ & 1 & 2 & 3 & 4 & 5 & 6 & 7 & 8 & 9
\\ \midrule $v_1(g)$ & $3+2\eps$ & $3+\eps$ & $3$ & $2-\delta$  & $1+\delta$ & 1 & 1 & 1 & 1
\\     $v_2(g), v_3(g)$ & $3+2\eps$ & $3+\eps$ & $3$ & $2+2\delta$ & $1+\delta$ & 1 & 1 & 1 & 1
\\ \bottomrule
\end{tabular}
\end{table}

For the instance given by \cref{ex:goods-ce}, after allocating the first 2 goods,
one can verify that the remaining algorithm is deterministic
(because at any point in the algorithm, there is at most one envy cycle
and the unenvied agent is unique).
The possible outputs of ECE are listed in \cref{ex:goods-ece-out}.

\begin{table}[htb]
\centering
\caption{Outputs of ECE for \cref{ex:goods-ce}.}
\label{ex:goods-ece-out}
\begin{tabular}{ccc}
\toprule Good 1's owner & Good 2's owner & Allocation
\\ \midrule
   1 & 3 & $A^{(1)} = \{\{1,6,7,9\},\{3,4\},\{2,5,8\}\}$
\\ 2 & 3 & $A^{(2)} = \{\{1,6,7,9\},\{3,4\},\{2,5,8\}\}$ %
\\ 1 & 2 & $A^{(3)} = \{\{1,6,7,9\},\{2,5,8\},\{3,4\}\}$
\\ 3 & 2 & $A^{(4)} = \{\{1,6,7,9\},\{2,5,8\},\{3,4\}\}$ %
\\ 2 & 1 & $A^{(5)} = \{\{2,5,8\},\{1,6,7,9\},\{3,4\}\}$
\\ 3 & 1 & $A^{(6)} = \{\{2,5,8\},\{3,4\},\{1,6,7,9\}\}$
\\ \bottomrule
\end{tabular}
\end{table}

Suppose we randomize over the allocations in \cref{ex:goods-ece-out}
by picking $A^{(k)}$ with probability $p_k$ for each $k \in [6]$.
We show (in \cref{thm:goods-ce-p}) that for the output distribution to be ex-ante PROP,
$p_5$ and $p_6$ must each be approximately $1/3$ for small $\delta$.

Known randomizations of ECE (\cite{feldman2023breakingArxiv}, and Appendix D of \cite{babaioff2022bestArxiv})
first randomly assign the first $n$ goods, without considering the remaining instance,
and then resume running (an appropriately randomized) ECE algorithm on the remaining goods.
We call such algorithms \emph{two-phase-online}.

We can show that two-phase-online randomizations of ECE cannot give us ex-ante PROP.
Consider the instance in \cref{ex:goods-ce} and the two other instances
obtainable by permuting the agents. Any two-phase-online randomization of ECE
will use the same value of $p$ for these three instances, because $p$ depends on
only the first 3 goods, and the agents have identical valuations for them.
By applying \cref{thm:goods-ce-p} to each instance, we get that
$p_j$ should be approximately $1/3$ for each $j \in [6]$, which is impossible.
Hence, no two-phase-online randomization of ECE can give us ex-ante PROP.

\begin{lemma}
\label[lemma]{thm:goods-ce-p}
For the fair division instance given in \cref{ex:goods-ce}, suppose we output
allocation $A^{(k)}$ with probability $p_k$ for each $k \in [6]$ (c.f.~\cref{ex:goods-ece-out}).
If the output distribution is ex ante PROP, then for all $j \in \{5, 6\}$, we have
\[ \frac{1-3(\delta-\eps)}{1-2(\delta-\eps)} \le 3p_j \le \frac{1+3\delta}{1-2\delta+2\eps}. \]
\end{lemma}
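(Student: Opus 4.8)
The plan is to directly compute the relevant expected values from the probability distribution over the six outputs, then impose the ex-ante PROP constraints and solve for $p_5, p_6$. First I would tabulate, for each agent $i$ and each allocation $A^{(k)}$, the value $v_i(A^{(k)}_i)$. Reading off \cref{ex:goods-ece-out}, the three bundles that ever appear are $\{1,6,7,9\}$, $\{2,5,8\}$, and $\{3,4\}$, so I only need each agent's value for these three bundles. Since agents $2$ and $3$ share a valuation and differ from agent $1$ only on good $4$, the bundle $\{3,4\}$ is the single place where the perturbations $\delta, \eps$ enter asymmetrically; the other two bundles have the same value for all agents. I would record $v_i(\{1,6,7,9\}) = 6+2\eps$, $v_i(\{2,5,8\}) = 5+\delta+\eps$ for every $i$, while $v_1(\{3,4\}) = 5-\delta$ and $v_2(\{3,4\}) = v_3(\{3,4\}) = 5+2\delta$.

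Next I would note from the table which agent receives which bundle in each $A^{(k)}$. Agent $1$ gets $\{1,6,7,9\}$ in $A^{(1)}$ through $A^{(4)}$ and gets $\{2,5,8\}$ in $A^{(5)}, A^{(6)}$; similar bookkeeping gives agents $2$ and $3$ their bundles. The key qualitative observation, which the lemma's conclusion reflects, is that agent $3$ receives the ``large'' bundle $\{1,6,7,9\}$ only in $A^{(6)}$ and agent $2$ receives it only in $A^{(5)}$, so $\E(v_3(A_3))$ and $\E(v_2(A_2))$ depend on $p_6$ and $p_5$ respectively in the critical way. I would then write $\E(v_i(A_i)) = \sum_{k=1}^6 p_k\, v_i(A^{(k)}_i)$ for $i \in \{2,3\}$ as an explicit affine function of the $p_k$ (using $\sum_k p_k = 1$), and set it against the proportional share $\PROP_i = v_i([m])/3$. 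Since $v_2([m]) = v_3([m]) = 16 + 3\delta + 3\eps$ (summing all nine goods with the agent-$2$ value for good $4$), the share is $(16+3\delta+3\eps)/3$.

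The ex-ante PROP constraints are $\E(v_2(A_2)) \ge \PROP_2$ and $\E(v_3(A_3)) \ge \PROP_3$. Each is a linear inequality in the $p_k$; after substituting $\sum_k p_k = 1$ and collecting terms, the coefficient structure should collapse so that the binding dependence is on $p_5$ (for agent $2$) and $p_6$ (for agent $3$), yielding a lower bound of the form $3p_j \ge \frac{1-3(\delta-\eps)}{1-2(\delta-\eps)}$. To get the matching upper bound I would exploit that the probabilities are nonnegative and sum to one, together with agent $1$'s PROP constraint: agent $1$'s proportional share is $\PROP_1 = (16 - 2\delta)/3$ using $v_1([m]) = 16 - 2\delta$, and requiring $\E(v_1(A_1)) \ge \PROP_1$ forces $p_5 + p_6$ to be small enough (equivalently $p_1 + \cdots + p_4$ large enough), which upper-bounds each of $p_5, p_6$ and produces $3p_j \le \frac{1+3\delta}{1-2\delta+2\eps}$. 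I would verify the algebra reduces cleanly by checking the $\delta = \eps = 0$ limit, where both bounds become $1$, forcing $p_5 = p_6 = 1/3$ exactly; this sanity check confirms the interpretation that drives the two-phase-online impossibility argument.

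The main obstacle I anticipate is purely the careful affine bookkeeping: getting every $v_i(A^{(k)}_i)$ correct across all six allocations and both perturbed agents, and then simplifying three linear inequalities in six variables (subject to the simplex constraints $p_k \ge 0$, $\sum p_k = 1$) down to the stated two-sided bound on a single variable. The risk is a sign or off-by-one slip in which bundle each agent holds in $A^{(5)}$ versus $A^{(6)}$, since those two rows are where agents $2$ and $3$ swap the large and small bundles. I would guard against this by cross-checking that the denominators $1 - 2(\delta-\eps)$ and $1 - 2\delta + 2\eps$ emerge from the expected-value expressions rather than being forced, and that the lower and upper bounds indeed sandwich $1/3$ for all admissible $\delta \in (0, 1/4]$ and $\eps \in (0, \delta/3]$.
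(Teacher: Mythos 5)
Your approach is exactly the paper's: tabulate $v_i(A^{(k)}_i)$ for the three bundles $\{1,6,7,9\}$, $\{2,5,8\}$, $\{3,4\}$, use the PROP constraints of agents $2$ and $3$ to lower-bound $p_5$ and $p_6$ respectively, and use agent $1$'s PROP constraint together with $\sum_k p_k = 1$ to upper-bound $p_5 + p_6$, hence each of $p_5, p_6$ after subtracting the other's lower bound. All of your bundle values and bundle assignments are correct. The one error is arithmetic: $v_1([m]) = 16 + 3\eps$, not $16 - 2\delta$ (the $-\delta$ on good $4$ cancels against the $+\delta$ on good $5$, and goods $1$ and $2$ contribute $2\eps + \eps$), so $\PROP_1 = 5 + 1/3 + \eps$. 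This matters: with your value of $\PROP_1$ the agent-$1$ constraint only yields $p_5 + p_6 \le (2/3 + 2\delta/3 + 2\eps)/(1-\delta+\eps)$, which is too weak to recover the stated upper bound, whereas the corrected $\PROP_1$ gives $p_5 + p_6 \le (2/3+\eps)/(1-\delta+\eps) \le (2/3+\eps)/(1-2\delta+2\eps)$ (using $\eps \le \delta$), and subtracting the lower bound $p_6 \ge (1/3-\delta+\eps)/(1-2\delta+2\eps)$ then yields $3p_5 \le (1+3\delta)/(1-2\delta+2\eps)$ exactly as claimed. This is precisely the bookkeeping slip you flagged as the main risk; once fixed, your derivation coincides with the paper's proof.
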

\begin{proof}
\Cref{table:goods-ce-bv} lists the values agents have for their own bundle
for each allocation in \cref{ex:goods-ece-out}.

\begin{table}[htb]
\centering
\caption{Agents' values for their own bundles in \cref{ex:goods-ece-out}.}
\label{table:goods-ce-bv}
\begin{tabular}{ccccccc}
\toprule $i$ & $v_i(A^{(1)}_i)$ & $v_i(A^{(2)}_i)$ & $v_i(A^{(3)}_i)$ & $v_i(A^{(4)}_i)$ & $v_i(A^{(5)}_i)$ & $v_i(A^{(6)}_i)$
\\ \midrule
   1 & $\mathbf{6}+2\eps$ & $\mathbf{6}+2\eps$ & $\mathbf{6}+2\eps$ & $\mathbf{6}+2\eps$ & $5+\delta+\eps$ & $5+\delta+\eps$
\\ 2 & $5+2\delta$ & $5+2\delta$ & $5+\delta+\eps$ & $5+\delta+\eps$ & $\mathbf{6}+2\eps$ & $5+2\delta$
\\ 3 & $5+\delta+\eps$ & $5+\delta+\eps$ & $5+2\delta$ & $5+2\delta$ & $5+2\delta$ & $\mathbf{6}+2\eps$
\\ \bottomrule
\end{tabular}
\end{table}

Note that $\PROP_1 = 5 + 1/3 + \eps$ and $\PROP_2 = \PROP_3 = 5 + 1/3 + \delta + \eps$.
Let $y_i$ be agent $i$'s expected utility from their own bundle, i.e.,
$y_i \defeq \sum_{j=1}^6 v_i(A^{(j)}_i)p_j$.
Suppose the output distribution is ex ante PROP.
Then $y_i \ge \PROP_i$ for all $i \in [3]$. Hence,
\begin{align*}
& 5 + 1/3 + \eps = \PROP_1 \le y_1 = (6+2\eps)(1-p_5-p_6) + (5+\delta+\eps)(p_5 + p_6)
\\ & \implies p_5 + p_6 \le \frac{2/3+\eps}{1-\delta+\eps} \le \frac{2/3+\eps}{1-2\delta+2\eps}.
\\ & 5 + 1/3 + \delta + \eps = \PROP_2 \le y_2 \le (6+2\eps)p_5 + (5+2\delta)(1-p_5)
\\ & \implies p_5 \ge \frac{1/3-\delta+\eps}{1-2\delta+2\eps}.
\\ & 5 + 1/3 + \delta + \eps = \PROP_3 \le y_3 \le (6+2\eps)p_6 + (5+2\delta)(1-p_6)
\\ & \implies p_6 \ge \frac{1/3-\delta+\eps}{1-2\delta+2\eps}.
\end{align*}
\[ p_5 \le \frac{2/3+\eps}{1-2\delta+2\eps} - p_6 \le \frac{1/3+\delta}{1-2\delta+2\eps}. \]
Similarly,
\[ p_6 \le \frac{1/3+\delta}{1-2\delta+2\eps}. \qedhere \]
\end{proof}

\section{Conclusion}
\label{sec:conclusion}

We investigated whether ECE can be randomized to obtain better fairness guarantees
than the state-of-the-art.
For two agents, we show that such a randomization is possible
and gives us essentially the best we can hope for: ex-ante EF and ex-post EFX in $O(m \log m)$ time.
We extend this result to chores too.

However, for more than two agents, ECE doesn't seem to be a very promising direction to pursue,
since natural randomizations of ECE fail to give ex-ante PROP.
The version of ECE that we consider, though, is one where we resolve all envy cycles
before allocating a good to an unenvied agent. This is unlike \citet{feldman2023breakingArxiv},
where they sometimes allocate a good to an unenvied agent even if envy cycles are present.
Perhaps one can circumvent our negative results in \cref{sec:ce}
by avoiding such an aggressive decycling.

Since the existence of EFX alloctions has been a difficult open problem for a long time,
it is reasonable to first focus on its relaxations for ex-post fairness.
The best relaxation that we know of (in our subjective opinion) is Epistemic EFX.
For ex-ante fairness, EF and PROP are the dominant notions.
We believe that PROP is not much worse than EF,
for the same reason that Epistemic EFX is not much worse than EFX
(see \cite{caragiannis2023new} for discussion regarding this).
Hence, obtaining ex-ante PROP and ex-post Epistemic EFX is,
in our opinion, a key open problem in the field of best-of-both-worlds fair division.

\end{document}